\newtheorem{lemma}{Lemma}
\newtheorem{theorem}{Theorem}
\title{Optimal Dispatch in Emergency Service System via Reinforcement Learning}
\author{
Cheng Hua\and
Tauhid Zaman
\affiliations
School of Management, Yale University\\
\emails
\{cheng.hua, tauhid.zaman\}@yale.edu
}
\begin{document}

\maketitle

\begin{abstract}
In the United States, medical responses by fire departments over the last four decades increased by 367\%. This had made it critical to decision makers in emergency response departments that existing resources are efficiently used. In this paper, we model the ambulance dispatch problem as an average-cost Markov decision process and present a policy iteration approach to find an optimal dispatch policy.  We then propose an alternative formulation using post-decision states that is shown to be mathematically equivalent to the original model, but with a much smaller state space. We present a temporal difference learning approach to the dispatch problem based on the post-decision states. In our numerical experiments, we show that our obtained temporal-difference policy outperforms the benchmark myopic policy.  Our findings suggest that emergency response departments can improve their performance  with minimal to no cost.
\end{abstract}

\section{Introduction}

In the United States, medical responses by fire departments over the last  four decades has increased by 367\% (Figure \ref{fig_EMS}), as reported by \cite{evarts2018fire}. The reasons for the dramatic increase in medical calls include the aging population and health insurance that covers most of the ambulance costs (Boston Globe, Nov 29, 2015). Cities with tight budgets are short of response units to respond to the growing amount of medical calls in time. NBC10 in Philadelphia, on Feb 28, 2019, reported that two thirds of the emergency medical calls had an ambulance response time of more than nine minutes. Thus, how to efficiently use the existing resources becomes an important topic to decision makers in emergency response departments. 

\begin{figure}[htbp]
\begin{center}
\includegraphics[width=8.5cm]{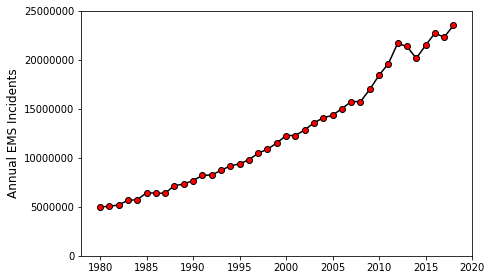}
\caption{Annual emergency medical servie (EMS) incidents from 1980 to 2018.}
\label{fig_EMS}
\end{center}
\vspace{-0.4cm}
\end{figure}

In current practice, cities dispatch ambulance units according to a fixed dispatch rule, which always dispatches the closest available unit. The dispatch policy is deemed as a myopic policy as it only considers the current call and ignores the impact of dispatching a unit to future calls. In this paper, we model the ambulance dispatch problem as an average-cost Markov decision process (MDP).  We finds a dispatch policy that minimizes the mean response time using reinforcement learning. We propose an alternative MDP formulation for the problem using post-decision states that we show is mathematically equivalent to the original MDP formulation, but with a much smaller state space. 

Due to the curse of dimensionality, the application of both formulations are only restricted to small problems. To solve larger problems, we use temporal difference learning (TD-learning)  with the post-decision states. In our numerical experiments, we show that the TD-learning algorithm converges quickly and the policies obtained from our method outperform the myopic policy that always dispatches the closest available unit.

The remainder of this paper is organized as follows. In Section \ref{sec_literature_review}, we provide a review of the relevant literature. In Section \ref{sec_MDP}, we present the Markov decision process formulation. Section \ref{sec_post_decision} presents the formulation using post-decision states which reduces the state space of the original formulation. In Section \ref{sec_TD_learning}, we present the temporal difference learning algorithm that is based on the post-decision states and its theoretical properties, while in Section \ref{sec_numerical}, we show the performance of the algorithm in numerical experiments. We conclude the paper in Section \ref{sec_conclusion}.

\section{Literature Review}
\label{sec_literature_review}
There are a number of papers that study dispatch policies for emergency medical services (EMS) in different settings. The optimal dispatch rule  was first studied in \cite{carter1972response}. The authors studied a simple system of two units and provided a closed form solution that determines the response areas to be served by each unit to minimize average response time. However, such a closed form solution no longer exists in a system that has more than two units, and finding the optimal dispatch rule has been an important topic. 

\cite{bandara2014priority} developed a simulation model to evaluate priority dispatching strategies for ambulance systems with different priorities of calls to improve patient survival probability. The optimal policy was found via full enumeration using a simulation model, and the authors found that dispatching the closest available unit is not always optimal in this setting. The limitation of their method is due to the exponential computational complexity. A system with more than four units would take a significant amount of time. \cite{mclay2013model} developed a Markov decision process to optimally dispatch ambulances to emergency calls with classification errors in identifying priorities to maximize the expected coverage of true high-risk patients. They compared the optimal policy to the myopic policy that sends the closest unit and found an advantage in increasing the utility. The authors applied the value iteration algorithm to find the optimal solution, but the solution method is also limited only to small problems. They extended the paper in \cite{mclay2013dispatching} to consider the optimal policy that balances equity and efficiency by adding four equity constraints. 

\cite{jagtenberg2017dynamic} studied whether dispatching the closest available unit is optimal in a dynamic ambulance dispatching problem based on a Markov decision process. The problem was discretized by time using one minute as the time interval. Two objectives were considered: mean response time and the fraction of calls with response time beyond a certain time threshold. The value iteration method was used to find the optimal solution. A heuristic that leverages the dynamic maximum expected covering location problem (MEXCLP) model was proposed by \cite{jagtenberg2015efficient}, where the MEXCLP problem was studied by \cite{daskin1983maximum} to determine the best unit locations using expected covering demand as a proximal objective function than response time. They found an improvement of 18\%  on the fraction of late arrivals using the optimal policy compared to the myopic policy, which provides insights about deviating from the myopic policy. \cite{jagtenberg2017benchmarking} empirically provide a bound for the gap between the existing solutions and the optimal solution.  

The aforementioned papers use enumeration or MDP models to analyze dispatch policies. While they are guaranteed to find the optimal policy, their computational complexity limit their use to only small systems, due to the curse of dimensionality. For larger problems, they normally fail badly computationally. Thus, many works have tried to resolve this issue using approximate dynamic programming (ADP). A comprehensive introduction to the application of Approximate Dynamic Programming in the operations research filed can be found in \cite{powell2010merging} and \cite{powell2007approximate}.

\cite{schmid2012solving} followed the same formulation as introduced in \cite{powell2010merging} that uses ADP with aggregation function approximation to an ambulance relocation and dispatching problem to reduce the mean response time. A seminal paper by \cite{maxwell2010approximate} applied ADP to the ambulance redeployment problem, where they used Monte Carlo simulation with one-step bootstrap to estimate complex expectations and applied least squares regression with linear function approximation to learn the approximate value functions. \cite{nasrollahzadeh2018real} studied the problem of real-time ambulance dispatching and relocation, which is formulated as a stochastic dynamic program and solved using ADP. Their formulation and method is the same as proposed in  \cite{maxwell2010approximate}. Issues of this approach include that while most of the time they can beat the benchmark policy, they normally never output the optimal policy. Also, it is not guaranteed that the learning method always converges and finding useful basis functions for approximation is more of an art than science, which requires domain knowledge and testing. 

Our paper is different from the existing work in the following three aspects: 1) We model the problem as an average-cost MDP problem, which solves for the optimal policy that minimizes the mean response time.  The papers we mentioned above use simpler discounted-reward schemes which minimize the sum of discounted response times as a surrogate to the mean response time.  However, discounted response times are more difficult to interpret compared to mean response time. 2) We propose an alternative formulation with post-decision states and show its mathematical equivalence to the original problem formulation. 3) We show that the proposed TD-learning algorithm based on post-decision states is guaranteed to converge to the optimal solution, while little or no theoretical guarantees exist in the aforementioned works using ADP methods. 

\section{Markov Decision Process Formulation}
\label{sec_MDP}
Consider a geographical region $R\subset \mathbb{R}^2$ that is served by a total of $N$ emergency units, all of the same type, e.g., ambulance units. Calls arrive in region $R$ according to a Poisson point process with an arrival intensity $\{\Lambda(x, y): (x, y) \in R\}$ at location with coordinate $(x,y)$. We partition the region $R$ into $J$ sub-regions and associate a center of mass with each sub-region $R_j \subset R$, which is also referred to as demand \textit{node} $j$. Note that $J$ can be as large as needed. Denote the total call rate in node $j$ as 
\begin{equation}
    \lambda_j=\int_{R_j} \Lambda(x, y) dxdy.
\end{equation}
The overall call rate in region $R$ is denoted by 
\begin{equation}
    \lambda = \sum_j \lambda_j = \int_R \Lambda(x, y) dxdy.
\end{equation}
We assume the mean service time follows a negative exponential distribution with rate $\mu_i$ for unit $i$. We assume that the service time includes turnout time, travel time and time at the scene. The justification for this assumption is that travel time is usually a small portion of the total service time. With longer travel times, \cite{jarvis1975optimization} mentioned a \textit{mean service time calibration} to calibrate the service time to maintain the Markov property. 

Let $b_i$ represent the state of unit $i$, where $b_i = 0$ if the unit is available and $b_i = 1$ if the unit is busy. We denote the state space of all units as an $N$-dimensional vector $B=\{b_{N}\cdots b_{1}\} \in \mathscr{B}$, which is in a backward order similar to the representation of binary numbers. We define $B=b_i$ as the status of unit $i$. Note that $|\mathscr{B}|=2^N$. 

If all units are busy when a call is received, we assume that it is handled by a unit outside of the system, such as a private ambulance company, or a unit from a neighboring jurisdiction, which is common mutual aid policy. 

Define $t_{ij}$ as the average response time from the base of unit $i$ to node $j$. In this paper, we aim to find the optimal dispatch policy that minimizes the average response time of all served calls. 

\subsection{State Space}
Define $S$ as the state space $S$ and $s \in S$ as the state of the system. We have 
\begin{equation}
s = (j, B),
\end{equation}
which is a tuple of $j$ and $B$ that consists of the location of the current call and the state of all units (available or busy) at the time of the arrival. We denote $s(0)=j$ and $s(1)=B$ in state $s$. The entire state space has size $|S| = J\times 2^N$.


\subsection{Action Space}
When a call is received in the system, we decide on which unit to be dispatched to this call. An action in this problem is to dispatch a particular unit upon receiving a call, so the action space is given as 
\begin{equation}
A = \{1,2,\cdots, N\}.
\end{equation}
Note that only an available unit may be dispatched. We define $A_s \subset A$ as the set of feasible actions at state $s$, where  
\begin{equation}
A_s = \{i: B(i)=0, i=\{1,2,\cdots, N\}\}.
\end{equation}
We define $a \in A_s$ as an action from the feasible action space. 

\subsection{Policy Space}
We define the policy space as $\Pi$, the set of all feasible policies. A policy $\pi \in \Pi$ is a mapping from the state space to the action space, $\pi: S \rightarrow A$. Specifically, $\pi(s)=a, a\in A_s$. The optimal policy $\pi^*\in \Pi$ is the policy that minimizes the average cost over all time steps. Our goal is to find this optimal policy. 

We use a benchmark policy which sends the closest available unit, denoted by $\pi^m \in \Pi$. 
\begin{equation} 
\pi^m(s) = \arg\min_i t_{ij},  \qquad  \forall i\in A_s,  \forall s\in S. 
\end{equation}

Sending the closest available unit is a policy widely used in emergency service systems. This policy is myopic as it does not consider potential future calls. Saving the closest unit to the current call might greatly reduce the expected response time of a future call. 

\subsection{Costs}
Define $c^{\pi}(s)$ as the cost of being in state $s$ following policy $\pi$, which equals to the response time
\begin{equation}
    c^{\pi}(s) = t_{ij}
\end{equation}
when the call location in state $s$ is $j$, $s(0)=j$, and the policy dispatches unit $i$ in state $s$, $\pi(s)=i$.

\subsection{Transition Probabilities with Augmented Transitions}
\label{subsec_trans_p}
Define $p^{\pi}(s,s')$ as the transition probability from state $s=(j,B)$ to state $s'=(j',B')$ under policy $\pi$. In determining the transition state probability, we consider an augmented transition where a unit completes a service and no dispatch action is needed. This is because the number of service completed between two arrivals is a random variable whose probability is complicated to compute. Introducing the augmented transition reduces the number of transition possibilities. Denote $I_i$ as the vector of all 0's except for the $i$th element, which is 1. The transition rate $p^{\pi}(s,s')$ with augmented transition is given as
\begin{equation}
p^{\pi}(s,s') = 
\begin{cases}
     \frac{\lambda_{j'}}{\lambda+\sum_{k:B(k)=1}\mu_k+\mu_i}, & \text{if } s' = (j', B+I_i),\\
     \frac{\mu_l}{\lambda+\sum_{k:B(k)=1}\mu_k+\mu_i}, & \text{if } s' = (\emptyset, B+I_i-I_l).
    \end{cases} 
\end{equation}
where the expression on the top corresponds to the transition from state $s$ to state $s'$ upon action $\pi(s)=i$ is taken and a new call arrives in node $j'$, and the bottom expression corresponds to the augmented transition where no arrival occurs but a busy unit $l\in A/A_s$ completes its current service. No action is needed since there is no arriving calls in this transition. 

\subsection{Bellman's Equation}

Define $V^{\pi}:S^+ \mapsto \mathbb{R}$ as the value function for the MDP following policy $\pi$ and the value of state $s$ is $V^{\pi}(s)$, where $S^{+}$ is the augmented state space that has dimension $|S^{+}| = (J+1)2^N$. Let $\mu^{\pi}$ be the average cost following policy $\pi$. The Bellman's equation for the average cost is
\begin{equation}
V^{\pi}(s) =  c^{\pi}(s) - \frac{\mu^{\pi}}{2} + \sum_{s'\in S} p^{\pi}(s,s') V^{\pi}(s'), \ \forall s \in S^{+}. \label{eq_bellman_perstate}
\end{equation}
Note that the $1/2$ in the above equation is due to the existence of augmented transitions. A transition that is due to a service completion has zero cost and the number of service completions is always equal to the number of calls being served. 

Define $V^{\pi}$ as the vector of all state values, $c^{\pi}$ as the vector of all state costs, $P^{\pi}$ as the transition matrix, and $e$ as the vector of all ones. The vector form of Bellman's equation is 
\begin{equation}
    V^{\pi} = c^{\pi} - \frac{\mu^{\pi} e}{2} + P^{\pi} V^{\pi}. \label{eq_bellman}
\end{equation}
The solution to the above Bellman's equation is not unique. Instead, the set of all value functions takes the form $\left\{V^{\pi}+m e \mid m \in \mathbb{R}\right\}$. Since shifting the value function by a constant value does not change the relative differences between state values, once we obtain a set of state values $V^{\pi}$, the policy can be updated as 
\begin{equation}
    \pi'(s) = \arg \min_{a\in A_s} t_{aj}+\sum_{s'\in S^{+}} p(s,s'|a) V^{\pi}(s') \label{eq_policy_update},
\end{equation}
where $p^{\pi}(s,s'|a)$ is the one-step transition probability when taking action $a$ instead of following the policy $\pi(s)$. This so-called policy iteration method for solving the MDP is summarized in Algorithm \ref{algo_MDP}.

\begin{algorithm}
\caption{Policy Iteration Method}
\label{algo_MDP}
\begin{algorithmic}[1]
\STATE Pick a random policy $\pi_0$. Set $k=0$.
\WHILE{$\pi_k \not = \pi_{k+1}$}
\STATE Compute the cost $c^{\pi_k}$ and transition matrix $P^{\pi_k}$.
\STATE \textbf{Policy Evaluation:} Solve the state values $V^{\pi_k}$ from the Bellman's equation \eqref{eq_bellman}.
\STATE \textbf{Policy Improvement:} For each state $s\in S$, update the actions of each state by 
\begin{equation*}
    \pi_{k+1}(s) = \arg \min_{a\in A_s} t_{aj}+\sum_{s'\in S^{+}} p(s,s'|a) V^{\pi_k}(s').
\end{equation*}
\STATE $k = k+1$
\ENDWHILE
\STATE \textbf{Output:}  Optimal Policy $\pi^* = \pi_k$
\end{algorithmic}
\end{algorithm}

\section{Post-Decision State Formulation}
\label{sec_post_decision}
The policy iteration method guarantees the convergence to the optimal dispatch policy that minimizes the average response time, which requires solving a linear system with $(J+1)2^N$ states repeatedly. In the section, by realizing the nature of the state transitions of the dispatch problem, we introduce the notion of post-decision states and use them as the new states in our problem. We show that the MDP formulation using post-decision states reduce the state space to $2^N$, which also guarantees to find the optimal dispatch policy. In the next section, we will develop the temporal difference learning method based on this formulation using post-decision states. 

In the original formulation, a state is a tuple of call locations and the statuses of all units $s=(j, B)$. A post-decision state $s_x$ is a state that the system is in immediately after observing state $s$ and taking action $a = \pi(s)$, before the next random information arrives into the system, which is the arrival of the next call location. Thus, given state $s$ and unit $i$ being dispatched, i.e., $a=\pi(s)=i$, the post-decision state $s_x$ is 
\begin{equation}
    s_x = B + I_i.
\end{equation}
Note that by defining the post-decision state this way, we only need  information about the statuses of all units. Define $S_x$ as the post-decision state space; we have $|S_x|=2^N$. Indeed, $S_x = \mathscr{B}$. 

\begin{lemma}
Let $p_x^{\pi}(s_x,s_x')$ be the corresponding transition probability from $s_x$ to $s_x'$. We have
\begin{equation}
p_x^{\pi}(s_x,s_x') = \begin{cases}
\frac{\sum_{j\in \mathscr{R}_{l|s_x}^{\pi}} \lambda_j}{\lambda+\sum_{k:B(k)=1}\mu_k+\mu_i},  & \text{if } s_x' = s_x+I_l,\\
\frac{\mu_l}{\lambda+\sum_{k:B(k)=1}\mu_k+\mu_i}, & \text{if } s_x' = s_x-I_l,
\end{cases}
\end{equation}
where $\mathscr{R}_{l|s_x}^{\pi}$ is the set of demand nodes where policy $\pi$ dispatches unit $l$, i.e.,
\begin{equation}
    \mathscr{R}_{l|s_x}^{\pi} = \{j: \pi(s')=l, s'=(j,s_x)\}.
\end{equation}
\end{lemma}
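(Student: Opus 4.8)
The plan is to derive the kernel directly from the continuous-time event dynamics, using the memorylessness of the exponential clocks to see that the process sampled at post-decision instants is Markov with exactly the stated transition probabilities. Fix a post-decision state $s_x\in\mathscr{B}$, reached immediately after a dispatch from some pre-decision state $(j,B)$ with dispatched unit $i$, so that $s_x=B+I_i$. At this instant the time to the next call is exponential with rate $\lambda$ (the calls form a Poisson process), and, by memorylessness of the exponential service times — including the just-started service of unit $i$ — the residual service time of every busy unit $k$ (those with $s_x(k)=1$) is an independent exponential with rate $\mu_k$. Hence the first subsequent event is a new call with probability $\lambda/D$ and a completion by unit $l$ with probability $\mu_l/D$, where $D:=\lambda+\sum_{k:s_x(k)=1}\mu_k=\lambda+\sum_{k:B(k)=1}\mu_k+\mu_i$ is the denominator in the statement.

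First I would handle a completion: if a busy unit $l$ finishes service, the augmented-transition convention of Section~\ref{subsec_trans_p} applies, no dispatch is made, and the configuration becomes $s_x-I_l$, which is the next post-decision state; this occurs with probability $\mu_l/D$, the second branch (and is possible only when $s_x(l)=1$, the rate being $0$ otherwise). Next I would handle an arrival: conditioned on the next event being an arrival it lands in node $j'$ with probability $\lambda_{j'}/\lambda$, hence with unconditional probability $\lambda_{j'}/D$; the system is then in pre-decision state $(j',s_x)$, the policy dispatches $l:=\pi((j',s_x))$, and the new post-decision state is $s_x+I_l$. Therefore $s_x\to s_x+I_l$ happens exactly when the call falls in some $j'$ with $\pi((j',s_x))=l$, i.e. $j'\in\mathscr{R}^{\pi}_{l|s_x}$, and summing over these disjoint node-events gives $p_x^{\pi}(s_x,s_x+I_l)=\big(\sum_{j\in\mathscr{R}^{\pi}_{l|s_x}}\lambda_j\big)/D$, the first branch.

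To finish I would note that every event out of $s_x$ is an arrival or a completion, and the target sets $\{s_x+I_l\}$ and $\{s_x-I_l\}$ are disjoint, so the case split is exhaustive; as a consistency check the outgoing probabilities sum to $1$ because $\{\mathscr{R}^{\pi}_{l|s_x}\}_l$ partitions the demand nodes. The step I expect to be the main obstacle is the Markov/memorylessness claim at the start — one must confirm that conditioning on the past history that led to $s_x$ leaves the residual-clock laws unchanged, which is precisely the exponential memoryless property — together with the degenerate configuration in which $s_x$ has all units busy: there an arriving call is served by an outside unit with no change of state, producing a self-loop of probability $\lambda/D$ that must be recorded separately (or ruled out by assumption). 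Everything else is the bookkeeping above.
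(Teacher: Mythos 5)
Your proposal is correct and follows essentially the same route as the paper: split on whether the next event is a service completion (which coincides with the augmented transition to $(\emptyset, B+I_i-I_l)$, giving $\mu_l/D$) or a call arrival, and in the latter case aggregate the arrival probabilities $\lambda_{j'}/D$ over the nodes $j'\in\mathscr{R}^{\pi}_{l|s_x}$ where the policy dispatches unit $l$. The only differences are cosmetic — you re-derive the competing-exponential-clocks denominator that the paper simply imports from its pre-decision transition probabilities, and you flag the all-units-busy self-loop, a degenerate case the paper leaves implicit.
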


\begin{proof}
For $s_x' = B+I_i-I_l$, where a transition happens when unit $l$ completes its current service, the post-decision state transition is the same as the transition from $s$ to the augmented state $s'= (\emptyset, B+I_i-I_l)$ as no call arrives and no action is needed for this state; thus $s'=s_x'$. 

For $s_x' = B+I_i+I_l$, where a call arrives in post-decision state $s_x$, we need to capture the randomness of exogenous information, which is the location of call that arrives in $s_x$. We thus have 
\begin{eqnarray*}
    p_x^{\pi}(s_x,s_x') &=& \sum_j p^{\pi}(s,s'= (j, s_x)) \mathbbm{1}_{\{\pi(s')=l\}}\\
    &=& \sum_j p^{\pi}(s,s') \mathbbm{1}_{\{s'= (j, s_x)\}} \mathbbm{1}_{\{\pi(s')=l\}}\\
    &=& \sum_{j\in \mathscr{R}_{l|s_x}^{\pi}} p^{\pi}(s,s'= (j, s_x))\\
    &=& \frac{\sum_{j\in \mathscr{R}_{l|s_x}^{\pi}} \lambda_j}{\lambda+\sum_{k:B(k)=1}\mu_k+\mu_i}
\end{eqnarray*}
\end{proof}

\begin{lemma}
The cost of post-decision state $s_x$, $c^{\pi}(s_x)$, is given as
\begin{equation}
    c^{\pi}(s_x) = \frac{\sum_{l\in A_s} \sum_{j\in \mathscr{R}_{l|s_x}^{\pi}}\lambda_j t_{lj}}{\lambda+\sum_{k:B(k)=1}\mu_k+\mu_i}.
\end{equation}
\end{lemma}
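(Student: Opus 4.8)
The plan is to read off $c^{\pi}(s_x)$ as the expected cost that accrues from the moment the system enters the post-decision state $s_x$ until the next post-decision state is reached, i.e., until the next event --- a call arrival or a service completion --- occurs. A post-decision state carries no pending call, so no cost is charged at $s_x$ itself; a cost is charged only when the ensuing event is a call arrival, at which point a unit is dispatched and its response time is incurred, whereas a service completion is free. Hence the object to compute is a single conditional expectation over the type of the next event, and I would state this modeling convention explicitly at the outset.

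First I would fix $s_x = B + I_i$ as arising from $s=(j,B)$ with $\pi(s)=i$, so that the units busy in $s_x$ are $\{k : B(k)=1\}\cup\{i\}$ and the total event rate out of $s_x$ equals $\lambda + \sum_{k:B(k)=1}\mu_k + \mu_i$, matching the denominator in the preceding lemma. Conditioning on the next event and invoking the transition rates of Section~\ref{subsec_trans_p}: the next event is a call arriving at node $j'$ with probability $\lambda_{j'}/(\lambda + \sum_{k:B(k)=1}\mu_k + \mu_i)$, and otherwise it is the completion of a currently busy unit (with the complementary probability, split among the busy units in proportion to their $\mu$'s). A service completion costs nothing; if a call arrives at node $j'$ the pre-decision state is $s'=(j',s_x)$, policy $\pi$ dispatches unit $\pi(s')$, and the incurred cost is $t_{\pi(s'),\,j'}$. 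Taking the expectation yields
\[
c^{\pi}(s_x) \;=\; \sum_{j'} \frac{\lambda_{j'}}{\lambda+\sum_{k:B(k)=1}\mu_k+\mu_i}\; t_{\pi(s'),\,j'}, \qquad s'=(j',s_x).
\]

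It then remains to regroup this finite sum over demand nodes according to which unit $\pi$ dispatches. For each available unit $l$, the demand nodes $j'$ with $\pi((j',s_x))=l$ are exactly $\mathscr{R}_{l|s_x}^{\pi}$; these sets are pairwise disjoint, and over $l\in A_s$ they exhaust the demand nodes at which an internal dispatch occurs (a unit $l$ that is busy in $s_x$ has $\mathscr{R}_{l|s_x}^{\pi}=\emptyset$, so including such $l$ in the range $A_s$ is harmless; if every unit in $s_x$ is busy the call is served externally at no cost). Substituting $t_{\pi(s'),\,j'} = t_{l j'}$ on the block $j'\in\mathscr{R}_{l|s_x}^{\pi}$ and interchanging the order of summation gives the claimed formula. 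The only genuinely subtle point is the convention in the first paragraph --- that the ``cost of a post-decision state'' is the expected cost of the transition leaving it, with service completions free --- chosen precisely so that, in the equivalence result to follow, the post-decision Bellman equation reproduces \eqref{eq_bellman}; everything after that is conditioning on the next event plus a re-indexing of a finite sum.
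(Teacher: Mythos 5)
Your proposal is correct and follows essentially the same route as the paper: both compute $c^{\pi}(s_x)$ as the expected cost of the one-step transition out of $s_x$ by conditioning on the next event (service completions costing zero, a call arrival costing the response time of the dispatched unit), and both reduce to the double sum over $l\in A_s$ and $j\in\mathscr{R}_{l|s_x}^{\pi}$. The only cosmetic difference is that the paper first conditions on which unit $l$ is dispatched, multiplying the conditional expected cost $c^{\pi}_l(s_x)$ by the transition probability $p_x^{\pi}(s_x,s_x+I_l)$ from the preceding lemma, whereas you sum over arrival nodes first and then regroup by dispatched unit; your explicit remarks on the modeling convention and on busy units having empty $\mathscr{R}_{l|s_x}^{\pi}$ are accurate and, if anything, make the argument slightly more careful than the paper's.
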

\begin{proof}
The cost of $s_x$ is the expected one-step transition cost from $s_x$ to $s_x'$ under policy $\pi$. Let $c^{\pi}_l(s_x)$ be the expected cost of dispatching unit $l$ in $s_x$. We have
\begin{equation}
    c^{\pi}_l(s_x) = \begin{cases}
    \frac{\sum_{j\in \mathscr{R}_{l|s_x}^{\pi}}\lambda_j t_{lj}}{\sum_{j\in \mathscr{R}_{l|s_x}^{\pi}}\lambda_j}, & \text{if } l\in A_s,\\
     0, & \text{if } l\in \emptyset.
    \end{cases} 
\end{equation}
We thus have
\begin{eqnarray*}
    c^{\pi}(s_x) &=& \sum_{l\in A_s}c^{\pi}_l(s_x)p_x^{\pi}(s_x,s_x+I_l)\\
    &=&  \sum_{l\in A_s}  \frac{\sum_{j\in \mathscr{R}_{l|s_x}^{\pi}}\lambda_j t_{lj}}{\sum_{j\in \mathscr{R}_{l|s_x}^{\pi}} \lambda_j} \frac{\sum_{j\in \mathscr{R}_{l|s_x}^{\pi}} \lambda_j}{\lambda+\sum_{k:B(k)=1}\mu_k+\mu_i}\\
    &=&\frac{\sum_{l\in A_s} \sum_{j\in \mathscr{R}_{l|s_x}^{\pi}}\lambda_j t_{lj}}{\lambda+\sum_{k:B(k)=1}\mu_k+\mu_i}.
\end{eqnarray*}
\end{proof}

Note that all components defining $c^{\pi}(s_x)$ and $p_x^{\pi}(s_x,s_x')$ are known, which are computed beforehand. Define $J^{\pi}:S_x \mapsto \mathbb{R}$ as the value function for the post-decision state space $S_x$. Let $\mu_x^{\pi}$ be the average cost following policy $\pi$ in the post-decision state space. The Bellman's equation is
\begin{equation}
J^{\pi}(s_x) =  c^{\pi}(s_x) - \frac{\mu_x^{\pi}}{2} + \sum_{s_x'\in S_x} p_x^{\pi}(s_x,s_x') J^{\pi}(s_x'), \ \forall s_x \in S_x.
\end{equation}
Let $J^{\pi}$, $c_x^{\pi}$ and $P_x^{\pi}$ be the corresponding vector representations. The vector form Bellman's equation around post-decision states is 
\begin{equation}
    J^{\pi} = c_x^{\pi} - \frac{\mu_x^{\pi} e}{2} + P_x^{\pi} J^{\pi}. \label{eq_bellman_post}
\end{equation}

\begin{theorem}
    The MDP formulation around post-decision states is equivalent to the original formulation. In particular
    \begin{enumerate}[i)]
        \item $\mu_x^{\pi} =  \mu^{\pi}$;
        \item For $s_x=B$, let $\Gamma = \lambda+\sum_{k:B(k)=1}\mu_k$. We have
        \begin{eqnarray}
            J^{\pi}(s_x) =\sum_{j}\frac{\lambda_j }{\Gamma}V^{\pi}\big(s^{(j)}\big)+\sum_{k:B(k)=1}\frac{\mu_k }{\Gamma}V^{\pi}\big(s^{[k]}\big), \label{eq_value_quivalence}
        \end{eqnarray}
        where $s^{(j)}=(j,B)$ and $s^{[k]}=(\emptyset,B-I_k)$.
    \end{enumerate}
\end{theorem}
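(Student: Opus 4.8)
The plan is to verify that the right-hand side of \eqref{eq_value_quivalence}, regarded as a function on $S_x$, together with the \emph{original} average cost $\mu^{\pi}$ in place of $\mu_x^{\pi}$, solves the post-decision Bellman equation \eqref{eq_bellman_post}. Concretely, for $B\in\mathscr{B}$ let $\widetilde{J}^{\pi}(B)$ denote the right-hand side of \eqref{eq_value_quivalence} (with $\Gamma$ as in the statement); the goal is to show $\widetilde{J}^{\pi}=c_x^{\pi}-\tfrac{\mu^{\pi}}{2}e+P_x^{\pi}\widetilde{J}^{\pi}$. Granting this, both parts follow from standard facts about average-cost chains: multiplying such a Bellman equation by the stationary distribution of the transition matrix shows that its additive scalar is forced to equal the long-run average cost, so $(\widetilde{J}^{\pi},\mu^{\pi})$ solving \eqref{eq_bellman_post} gives $\mu_x^{\pi}=\mu^{\pi}$, which is~i); and since solutions of \eqref{eq_bellman_post} are unique up to adding a multiple of $e$ (as already noted for \eqref{eq_bellman}), $J^{\pi}=\widetilde{J}^{\pi}$ once $V^{\pi}$ and $J^{\pi}$ are normalized consistently, which is~ii).

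The engine is a one-step unrolling of the original Bellman equation \eqref{eq_bellman_perstate}. Fix $B$ and recall $\Gamma=\lambda+\sum_{k:B(k)=1}\mu_k$. For a pre-decision state $s^{(j)}=(j,B)$ write $i_j:=\pi(s^{(j)})$; the transitions out of $s^{(j)}$ listed in Section~\ref{subsec_trans_p} lead to the states $(j',B+I_{i_j})$ and $(\emptyset,B+I_{i_j}-I_l)$, and comparing their probabilities term by term with \eqref{eq_value_quivalence} written for the unit state $B+I_{i_j}$ shows that $\sum_{s'}p^{\pi}(s^{(j)},s')V^{\pi}(s')=\widetilde{J}^{\pi}(B+I_{i_j})$. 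Hence \eqref{eq_bellman_perstate} becomes $V^{\pi}(s^{(j)})=t_{i_j j}-\tfrac{\mu^{\pi}}{2}+\widetilde{J}^{\pi}(B+I_{i_j})$. For an augmented state $s^{[k]}=(\emptyset,B-I_k)$, where the cost is zero and no action is taken, the same comparison with \eqref{eq_value_quivalence} written for the unit state $B-I_k$ gives $V^{\pi}(s^{[k]})=-\tfrac{\mu^{\pi}}{2}+\widetilde{J}^{\pi}(B-I_k)$.

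Next I substitute these two identities into the definition of $\widetilde{J}^{\pi}(B)$. Since $\sum_j\lambda_j/\Gamma+\sum_{k:B(k)=1}\mu_k/\Gamma=1$, all the $-\mu^{\pi}/2$ contributions collapse to a single $-\mu^{\pi}/2$, leaving $\widetilde{J}^{\pi}(B)=-\tfrac{\mu^{\pi}}{2}+\tfrac1\Gamma\sum_j\lambda_j t_{i_j j}+\tfrac1\Gamma\sum_j\lambda_j\widetilde{J}^{\pi}(B+I_{i_j})+\sum_{k:B(k)=1}\tfrac{\mu_k}{\Gamma}\widetilde{J}^{\pi}(B-I_k)$. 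Now regroup the two sums over demand nodes by the available unit that receives the call: since $\{\mathscr{R}_{l|B}^{\pi}\}_{l:B(l)=0}$ partitions the demand nodes, the first sum equals $\tfrac1\Gamma\sum_{l:B(l)=0}\sum_{j\in\mathscr{R}_{l|B}^{\pi}}\lambda_j t_{lj}$, which is $c^{\pi}(B)$ by Lemma~2; the second equals $\sum_{l:B(l)=0}\tfrac1\Gamma\big(\sum_{j\in\mathscr{R}_{l|B}^{\pi}}\lambda_j\big)\widetilde{J}^{\pi}(B+I_l)=\sum_{l:B(l)=0}p_x^{\pi}(B,B+I_l)\widetilde{J}^{\pi}(B+I_l)$ by Lemma~1; and the last sum is $\sum_{k:B(k)=1}p_x^{\pi}(B,B-I_k)\widetilde{J}^{\pi}(B-I_k)$, again by Lemma~1. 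The resulting identity is exactly \eqref{eq_bellman_post} with $\mu^{\pi}$ in place of $\mu_x^{\pi}$, as required.

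I expect the only real difficulty to be bookkeeping: keeping the three normalizing denominators straight---$\Gamma$ out of the post-decision state $B$, $\Gamma+\mu_{i_j}$ out of $(j,B)$, and $\Gamma-\mu_k$ out of $(\emptyset,B-I_k)$---and making the regrouping of $\sum_j$ into $\sum_{l}\sum_{j\in\mathscr{R}_{l|B}^{\pi}}$ match the exact formulas in Lemmas~1 and~2, including degenerate cases where some $\mathscr{R}_{l|B}^{\pi}$ is empty. It is worth recording that the same two identities, now read as $V^{\pi}(s^{(j)})=t_{i_j j}-\tfrac{\mu^{\pi}}{2}+J^{\pi}(B+I_{i_j})$ and $V^{\pi}(s^{[k]})=-\tfrac{\mu^{\pi}}{2}+J^{\pi}(B-I_k)$, recover $V^{\pi}$ from $J^{\pi}$, so the two formulations carry the same information; in particular the greedy improvement step \eqref{eq_policy_update} built from $J^{\pi}$ returns the same policy as the one built from $V^{\pi}$, which is the operational content of the asserted equivalence.
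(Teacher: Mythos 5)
Your proof is correct and follows essentially the same route as the paper's: expand the right-hand side of \eqref{eq_value_quivalence} one step via the pre-decision Bellman equation \eqref{eq_bellman_perstate}, regroup the sums over demand nodes by the dispatched unit, and recognize the result as the post-decision Bellman equation \eqref{eq_bellman_post}. If anything you are more careful than the paper, which leaves implicit both the stationary-distribution argument forcing the additive scalar to equal $\mu_x^{\pi}$ (part i) and the uniqueness-up-to-an-additive-constant normalization needed to conclude part ii.
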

\begin{proof}
Expanding the value functions $V^{\pi}$ in \eqref{eq_value_quivalence} by \eqref{eq_bellman_perstate} and collecting terms, we have
\begin{eqnarray*}
    && \sum_{j}\frac{\lambda_j}{\Gamma}V^{\pi}\big(s^{(j)}\big) +\sum_{k:B(k)=1}\frac{\mu_k }{\Gamma}V^{\pi}\big(s^{[k]}\big)\\
    &=& \sum_{j}\frac{\lambda_j}{\Gamma}c^{\pi}\big(s^{(j)}\big)  - \sum_{j}\frac{ \lambda_j}{\Gamma}\frac{\mu^{\pi}}{2} \\
    && + \sum_{j}\frac{ \lambda_j}{\Gamma} \sum_{s'\in S} p^{\pi}(s^{(j)},s') V^{\pi}(s')\\
    && + \sum_{k:B(k)=1}\frac{\mu_k }{\Gamma} \underbrace{c^{\pi}\big(s^{[k]}\big)}_{=0} - \sum_{k:B(k)=1}\frac{\mu_k}{\Gamma} \frac{\mu^{\pi}}{2} \\
    && + \sum_{k:B(k)=1}\frac{\mu_k}{\Gamma} \sum_{s'\in S} p^{\pi}(s^{[k]},s') V^{\pi}(s')\\
    &=& \underbrace{\frac{\sum_{l\in A_{s^{(j)}}} \sum_{j\in \mathscr{R}_{l|s_x}^{\pi}}\lambda_j t_{lj}}{\Gamma}}_{c^{\pi}(s_x)} - \Big(\underbrace{\sum_{j}\frac{\lambda_j}{\Gamma} + \sum_{k:B(k)=1}\frac{\mu_k}{\Gamma}}_{=1}\Big) \frac{\mu^{\pi}}{2} \\
    && + \sum_{j}\frac{ \lambda_j}{\Gamma} \Big[\frac{\sum_{j'} \lambda_{j'}}{\Gamma+\mu_{\pi(s^{(j)})}} V^{\pi}\big(s^{(j')}\big)+  \frac{\sum_{l:B(l)=1} \mu_l}{\Gamma+\mu_{\pi(s^{(j)})}} V^{\pi}\big(s^{[l]}\big)\Big]\\
    && + \sum_{k:B(k)=1}\frac{\mu_k}{\Gamma} \Big[\frac{\sum_{j'} \lambda_{j'}}{\Gamma} V^{\pi}\big(s^{(j')[k]}\big)\\
    && \qquad\qquad\qquad\qquad +\frac{\mu_k}{\Gamma} \frac{\sum_{l:B(l)=1} \mu_k}{\Gamma} V^{\pi}\big(s^{[k][l]}\big)\Big]\\
    &=& c^{\pi}(s_x) - \frac{\mu^{\pi}}{2} + \sum_{s_x'\in S_x} p_x^{\pi}(s_x,s_x') J^{\pi}(s_x')
\end{eqnarray*}
The last equality holds by realizing the first part of the summation corresponds to the transition from a call arrival and dispatching a unit, and the the last part corresponds to the transition from a service completion that a unit returns to its base and becomes available. 
\end{proof}

Under this formulation, the new policy $\pi'$ for state $s=(j, B)$ is updated as
\begin{equation}
    \pi'(s) = \arg \min_{a\in A_s} t_{aj}+ J^{\pi}(s_x= B + I_a).
\end{equation}
We note that even though the state space is smaller using post-decision states, the policy state space remains the same. The new policy iteration around post-decision states is summarized in Algorithm \ref{algo_MDP_revised}.

\begin{algorithm}
\caption{Policy Iteration with Post-Decision States}
\label{algo_MDP_revised}
\begin{algorithmic}[1]
\STATE Pick a random policy $\pi_0$. Set $k=0$.
\WHILE{$\pi_k \not = \pi_{k+1}$}
\STATE Compute the cost $c_x^{\pi_k}$ and transition matrix $P_x^{\pi_k}$.
\STATE \textbf{Policy Evaluation:} Solve the state values $J^{\pi_k}$ from the Bellman's equation \eqref{eq_bellman_post}.
\STATE \textbf{Policy Improvement:} For each state $s\in S$, update the actions of each state by 
\begin{equation*}
    \pi_{k+1}(s) = \arg \min_{a\in A_s} t_{aj}+ J^{\pi_k}(s_x= B + I_a).
\end{equation*}
\STATE $k = k+1$
\ENDWHILE
\STATE \textbf{Output:}  Optimal Policy $\pi^* = \pi_k$
\end{algorithmic}
\end{algorithm}

\section{Temporal Difference Learning with Post-Decision States}
\label{sec_TD_learning}
The policy iteration method with post-decision states is also guaranteed to find the optimal policy while reducing the number of equations from $(J+1)2^N$ to $2^N$. However, the exponential complexity limits its application only to small or medium-sized problems. In this section, we present the temporal difference learning method using value function approximation based on the formulation with post-decision states that reduces the dimension of the problem. 

Let $\phi_{[p]}: S_x \mapsto \mathbb{R}$, $p=1,2\cdots,P$, be the basis functions of post-decision states, and let $r = \{r_{[p]}: p=1,2\cdots,P\}$ be the tunable parameters. The value function approximation is given by $\tilde{J}(s_x, r)$, where
\begin{equation}
    \tilde{J}(s_x, r)=\sum_{p=1}^{P} r_{[p]} \phi_{[p]}(s_x).
\end{equation}

Let $\tilde{J}(r)$ be the vector of approximate state values of all states given parameter vector $r$ and let $\Phi$ be an $2^N \times P$ matrix whose pth column is equal to the vector $\phi_{[p]}$ of all states in $S^x$. The vector form of the above equation is 
\begin{equation}
    \tilde{J}(r) = \Phi r.
\end{equation}

Define $\left\{x_{t} \mid t=0,1, \ldots\right\}$ as the Markov chain on the post-decision state space $S_x$ with transition matrix $P_x^{\pi}$.
\begin{lemma}
\label{lemma_MC}
The Markov chain corresponding to $P_x^{\pi}$ is
irreducible and has a unique stationary distribution.
\end{lemma}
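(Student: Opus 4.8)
The plan is to verify the two defining properties separately: first irreducibility, then the existence and uniqueness of a stationary distribution. Since the post-decision state space is $S_x = \mathscr{B} = \{0,1\}^N$, a state $s_x$ records which units are busy, and the transition structure from Lemma~1 says that in one step $s_x$ either gains a busy unit (a call arrives and is dispatched to some available unit $l$, provided $\sum_{j\in\mathscr{R}_{l|s_x}^{\pi}}\lambda_j>0$) or loses a busy unit (a busy unit $l$ completes service, with rate $\mu_l>0$).

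For irreducibility I would show that every state communicates with the all-available state $\mathbf{0}=B=(0,\dots,0)$. Starting from any $s_x$, repeatedly applying service-completion transitions $s_x\mapsto s_x-I_l$ for busy units $l$ has strictly positive probability at each step (since $\mu_l>0$ makes the corresponding entry of $P_x^\pi$ positive), so $\mathbf{0}$ is reachable from every state. Conversely, from $\mathbf{0}$ one can reach any target state $s_x$ by adding busy units one at a time: the key point is that for any nonempty configuration the policy $\pi$ must dispatch \emph{some} available unit to \emph{every} call location $j$ (the action set $A_{s'}$ is nonempty whenever not all units are busy), so $\bigcup_{l\in A_{s'}}\mathscr{R}_{l|s_x}^{\pi}$ covers all nodes $j$ with $\lambda_j>0$, hence $\sum_j\lambda_j=\lambda>0$ forces at least one unit $l$ with $\sum_{j\in\mathscr{R}_{l|s_x}^{\pi}}\lambda_j>0$, i.e. a positive-probability "arrival" transition out of every non-full state. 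Chaining these shows $\mathbf{0}$ communicates with $s_x$, and combining the two directions gives that all states communicate, so the chain is irreducible.

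Existence and uniqueness of the stationary distribution then follow from standard finite Markov chain theory: $S_x$ is finite ($|S_x|=2^N$), and a finite irreducible Markov chain has a unique stationary distribution (the chain is automatically positive recurrent). I would cite this directly rather than re-derive it.

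The main obstacle is the irreducibility argument in the "adding busy units" direction, because it depends on the policy $\pi$: one must rule out the degenerate possibility that $\pi$ never dispatches a particular unit, or more precisely that from some reachable configuration every feasible arrival transition is assigned probability zero. The resolution is the covering observation above — since the feasible action set is nonempty at every non-full configuration and the node call rates sum to $\lambda>0$, at least one outgoing arrival transition always has positive weight — but one should also note the mild implicit assumption that every unit has some node it is responsible for under $\pi$ (equivalently, that each unit is reachable); if a unit were genuinely never used, the chain would be irreducible on the sub-cube of units actually in play, and one restricts attention there. I would state this carefully so the covering step is airtight.
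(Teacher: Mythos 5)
Your proof is essentially correct, but it takes a genuinely different route from the paper: the paper does not argue irreducibility from first principles at all, it simply observes that the post-decision chain is the hypercube loss model of Larson (1974) and inherits ergodicity from that known result. Your self-contained argument is arguably more informative, and the ``finite $+$ irreducible $\Rightarrow$ unique stationary distribution'' step is exactly right. One refinement to the irreducibility half: as literally stated, ``from $\mathbf{0}$ one can reach any target state $s_x$ by adding busy units one at a time'' is not quite available to you, because the policy controls \emph{which} unit is added, and from a given configuration the only positive-probability additions are to $s_x+I_l$ for units $l$ that $\pi$ actually uses there --- you cannot insist on adding precisely the units of the target. The clean fix, which your own ingredients already supply, is to route through the all-busy state: from every non-full configuration at least one arrival transition is positive (your covering observation, noting that $A_{s'}\neq\emptyset$ and $\lambda>0$), so the all-busy state is reachable from $\mathbf{0}$; from the all-busy state any target is reachable by service completions alone, since every $\mu_l>0$. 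This also makes your closing caveat unnecessary: no ``sub-cube of units actually in play'' can arise, because when unit $i$ is the only available unit the feasible action set forces $\pi$ to dispatch it, so every state of the full hypercube is reachable under any feasible policy.
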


The proof of the above lemma is straightforward by noting that the Markov chain with post-decision state space forms a hypercube loss model whose property can be found in \cite{larson74hypercube}. We define the temporal difference $d_t$ as
\begin{equation}
    d_{t}=c\left(x_{t}\right)-\frac{\mu_{t}}{2}+\tilde{J}\left(x_{t+1}, r_{t}\right)-\tilde{J}\left(x_{t}, r_{t}\right),
\end{equation}
where $c\left(x_{t}\right)-\frac{\mu_{t}}{2}+\tilde{J}\left(x_{t+1}, r_{t}\right)$ is the differential cost function at state $x_t$ based on the one-step bootstrap and $\tilde{J}\left(x_{t}, r_{t}\right)$ is the old approximate differential cost function at state $x_t$. Define $r_{t}$ as the parameter vector at time $t$. We update the  parameter vector $r$ using this temporal difference by
\begin{equation}
    r_{t+1}=r_{t}+\gamma_{t} d_{t} \phi\left(x_{t}\right),
\end{equation}
\begin{equation}
    \mu_{t+1}=\left(1-\gamma_{t}\right) \mu_{t}+2\gamma_{t} c\left(x_{t}\right).
\end{equation}
We let $\gamma_{t} = \frac{a}{a+t}$ where $a\geq 1$ is a hyper-parameter that controls the learning speed. We note that when $a=1$, $\gamma_{t}$ forms the harmonic series, but usually the performance is not good empirically as discussed in \cite{powell2007approximate}. A larger $a$ slows the learning speed at the beginning of the process and choosing the appropriate value of $a$ depends on the nature of the problem. 

In this paper, we present a simple way of defining the basis function. We give an index $i_{x}$ to each post-decision state $s_x=B$, where $i_x = \sum_i^N 2^{i} \mathbbm{1}_{B(i)=1}$. We let the basis function $\phi_{[p]}(s_x)$ be 
\begin{equation}
    \phi_{[p]}(s_x) = \begin{cases}
     1, & \text{if } p = i_x,\\
     0, & \text{if } p\not= i_x.
    \end{cases} 
\end{equation}

The algorithm is described in Algorithm \ref{algo_TD_Learning}. 

\begin{algorithm}[htbp]
\caption{Policy Iteration with TD-Learning}
\label{algo_TD_Learning}
\begin{algorithmic}[1]
\STATE Pick a random policy $\pi_0$. Set $k=0$. 
\STATE Specify $T$ and $K$.
\WHILE{$k\leq K$}
\STATE Set $t=0$. Initialize $r_0$ and $\mu_0$. 
\STATE Starting from a random state $x_0$, generate a state trajectory $\left\{x_{t} \mid t=0,1, \ldots T\right\}$ corresponding to the Markov chain with state transition probability $P_x^{\pi_k}$ that is defined by the policy $\pi_k$. 
\FOR{$t = 0$ to $T$}
    \STATE Calculate the temporal difference $d_t$ by
    \begin{equation*}
        d_{t}=c\left(x_{t}\right)-\frac{\mu_{t}}{2}+\tilde{J}\left(x_{t+1}, r_{t}\right)-\tilde{J}\left(x_{t}, r_{t}\right).
    \end{equation*}
    \STATE Update the parameters by
    \begin{eqnarray*}
        r_{t+1}&=&r_{t}+\gamma_{t} d_{t} \phi\left(x_{t}\right),\\
        \mu_{t+1}&=&\left(1-\gamma_{t}\right) \mu_{t}+2\gamma_{t} c\left(x_{t}\right).
    \end{eqnarray*}
\ENDFOR
\STATE For each state $s\in S$, update the policy
\begin{equation*}
    \pi_{k+1}(s) = \arg \min_{a\in A_s} t_{aj}+ \tilde{J}(s_x=B+I_a, r_T).
\end{equation*}
\STATE $k = k+1$
\ENDWHILE
\STATE \textbf{Output:} Policy $\tilde{\pi}^* = \pi_K$
\end{algorithmic}
\end{algorithm}

\begin{theorem}
\label{thm_convergence}
Algorithm \ref{algo_TD_Learning} has the following property:
\begin{enumerate}
    \item Converges with probability 1.
    \item The limit of the sequence $\frac{\mu_t}{2}$ at the $k$th iteration of the algorithm is the average cost $\frac{\mu_x^{\pi_k}}{2}$, i.e.,
    \begin{equation}
        \lim_{t\rightarrow \infty} \mu_t = \mu_x^{\pi_k}.
    \end{equation}
    \item The limit of the sequence $r_t$ at the $k$th iteration of the algorithm, denoted by $r^{k*}$, is the unique solution of the equation
    \begin{equation}
        T\left(\Phi r^{k*}\right)=\Phi r^{k*},
    \end{equation}
    where $T: \mathbb{R}^{2^N} \mapsto \mathbb{R}^{2^N}$ is an operator defined by 
    \begin{equation}
        T J = c_x^{\pi_k}-\frac{\mu^{\pi_k}_x e}{2}+P_x^{\pi_k} J.
    \end{equation}
\end{enumerate}
\end{theorem}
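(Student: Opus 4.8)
The plan is to recognize the inner loop of Algorithm~\ref{algo_TD_Learning}, run at a fixed policy $\pi_k$, as an instance of average-cost temporal-difference learning with linear function approximation, and to invoke the ODE / stochastic-approximation analysis of such recursions (in the style of Tsitsiklis and Van Roy's treatment of average-cost TD). First I would check the standing hypotheses. By Lemma~\ref{lemma_MC} the post-decision chain $\{x_t\}$ with transition matrix $P_x^{\pi_k}$ is irreducible on the finite set $S_x$ and has a unique stationary distribution $\xi$ with $\xi>0$; the one-step costs $c(x)$ are bounded since $S_x$ is finite and the $t_{ij}$ are finite; the step sizes $\gamma_t=a/(a+t)$ satisfy $\sum_t\gamma_t=\infty$ and $\sum_t\gamma_t^2<\infty$; and the basis functions $\phi_{[p]}$ are the $2^N$ distinct coordinate indicators of the post-decision states, hence linearly independent, so $\Phi$ has full column rank $P=2^N$.

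For parts 1 and 2 I would first note that the $\mu_t$ update $\mu_{t+1}=(1-\gamma_t)\mu_t+2\gamma_t c(x_t)$ does not depend on $r_t$, so it is a self-contained Robbins--Monro averaging recursion; by the ergodic theorem for the irreducible finite chain $\{x_t\}$ and the step-size conditions, $\mu_t\to 2\,\mathbb{E}_\xi[c(x)]$ with probability one. To identify the limit, pre-multiply the vector Bellman equation~\eqref{eq_bellman_post} by $\xi^\top$: since $\xi^\top P_x^{\pi_k}=\xi^\top$ the $\xi^\top J^{\pi_k}$ terms cancel and one gets $\xi^\top c_x^{\pi_k}=\mu_x^{\pi_k}/2$, i.e.\ $\mathbb{E}_\xi[c(x)]=\mu_x^{\pi_k}/2$, so $\mu_t\to\mu_x^{\pi_k}$. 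This is exactly where the factor $1/2$ in the Bellman equation cancels the factor $2$ in the $\mu_t$ update, which is why the bookkeeping introduced by the augmented transitions is consistent with the standard framework.

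For part 3, once $\mu_t$ has converged the $r_t$ iteration is a linear stochastic approximation whose mean-field ODE is $\dot r=\Phi^\top D\big(T(\Phi r)-\Phi r\big)$ with $D=\mathrm{diag}(\xi)$, i.e.\ $\dot r=\Phi^\top D\big(c_x^{\pi_k}-\tfrac{\mu_x^{\pi_k}e}{2}+(P_x^{\pi_k}-I)\Phi r\big)$. Because $\xi$ is stationary, $P_x^{\pi_k}$ is non-expansive in the weighted norm $\|\cdot\|_D$ (Jensen together with $\xi^\top P_x^{\pi_k}=\xi^\top$), so $v^\top D(P_x^{\pi_k}-I)v\le 0$ for every $v$; this makes the ODE stable and forces every trajectory to converge to an equilibrium, which is precisely a solution of the projected fixed-point equation $\Phi r=\Pi\,T(\Phi r)$, where $\Pi=\Phi(\Phi^\top D\Phi)^{-1}\Phi^\top D$ is the $\|\cdot\|_D$-projection onto $\mathrm{range}(\Phi)$. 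For the tabular basis chosen here $\mathrm{range}(\Phi)=\mathbb{R}^{2^N}$, so $\Pi=I$ and the equation collapses to $T(\Phi r^{k*})=\Phi r^{k*}$. The standard stochastic-approximation convergence theorem (martingale-difference noise with the required integrability, summable squared step sizes, stable limiting ODE) then delivers convergence with probability one of $r_t$ to this $r^{k*}$, which also proves part 1.

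The main obstacle is the familiar degeneracy of the average-cost setting: $T$ has an entire line of fixed points $\{\Phi r+m e:m\in\mathbb{R}\}$, so $v^\top D(P_x^{\pi_k}-I)v=0$ exactly when $v\in\mathrm{span}(e)$, and with the tabular basis $e\in\mathrm{range}(\Phi)$, so $\Phi^\top D(P_x^{\pi_k}-I)\Phi$ is only negative \emph{semi}definite and the ODE has a continuum of equilibria. Pinning down a genuinely unique limit $r^{k*}$ therefore needs extra care --- showing that the increment $\gamma_t d_t\phi(x_t)$, being supported on a single coordinate with scalar magnitude, leaves a suitable scalar functional of $r_t$ (such as $e^\top r_t$ or its $\xi$-weighted analogue) asymptotically conserved, so the limit is the unique fixed point of $T$ with that conserved value, or equivalently reading ``unique solution'' as ``the differential value function selected by the iteration''. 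A secondary point is that the post-decision chain is periodic (each transition changes the number of busy units by $\pm1$), so one should use the periodic-chain form of the ergodic and stochastic-approximation arguments; this is routine, since Cesàro averages and the fundamental matrix still exist for every irreducible finite chain.
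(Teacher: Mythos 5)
Your route is the same one the paper takes: its entire proof is a one-paragraph appeal to the average-cost temporal-difference convergence theorem of Tsitsiklis and Van Roy, justified by Lemma~\ref{lemma_MC}, the linear independence of the indicator basis, and the step-size conditions $\sum_t\gamma_t=\infty$, $\sum_t\gamma_t^2<\infty$. You invoke exactly the same machinery but actually unpack it (ergodic averaging for $\mu_t$, the identification $\xi^\top c_x^{\pi_k}=\mu_x^{\pi_k}/2$ obtained by left-multiplying \eqref{eq_bellman_post} by the stationary distribution, the mean ODE and the projected fixed-point equation), so on the main line your proposal is correct and, if anything, more explicit than the paper's.

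The two caveats you raise at the end are not cosmetic; they are precisely the hypotheses of the cited theorem that fail in this instance, and the paper addresses neither. First, the Tsitsiklis--Van Roy result assumes, in addition to full column rank of $\Phi$, that $e\notin\mathrm{range}(\Phi)$; with the tabular basis used here $\Phi$ is a permutation of the identity, so $e\in\mathrm{range}(\Phi)$, the drift matrix $\Phi^\top D\big(P_x^{\pi_k}-I\big)\Phi$ is only negative \emph{semi}definite, and $T(\Phi r)=\Phi r$ has the one-dimensional solution set $\{r^{k*}+me:m\in\mathbb{R}\}$. Hence the word ``unique'' in part~3 cannot be literally correct, and convergence of $r_t$ to a single point requires the supplementary argument you sketch (the component of $r_t$ along $e$ converges because the accumulated noise is an $L^2$-bounded martingale and the bias from $\mu_t-\mu_x^{\pi_k}$ is summable, but the limit on that line is random and initialization-dependent). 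This does not harm the algorithm --- shifting $\tilde J$ by $me$ leaves the policy-improvement argmin unchanged --- but it is a genuine gap in the theorem as stated and in the paper's citation-only proof. Second, the post-decision chain changes the number of busy units by $\pm1$ at each step, so it is periodic (up to a possible self-loop at the all-busy state when a call is lost), while the cited theorem assumes aperiodicity; as you note this is repairable with Ces\`aro limits, but it also needs to be said. In short, your proposal follows the paper's approach and is more careful than the paper's own proof; the issues you flag are real and would have to be resolved (e.g., by dropping one basis function or by weakening part~3 to ``converges to \emph{a} solution of $T(\Phi r)=\Phi r$'') for the theorem to stand as written.
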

The proof of the above theorem follows from Lemma \ref{lemma_MC}, and the basis functions $\phi(s_x)$ being linearly independent for all states. Also it is necessary to have the sequence $\gamma_t$ is positive, deterministic, and satisfies $\sum_{t=0}^{\infty} \gamma_{t}=\infty$ and $\sum_{t=0}^{\infty} \gamma_{t}^{2}<\infty$. A detailed proof are shown in \cite{tsitsiklis1999average}. 

Theorem \ref{thm_convergence} guarantees that the TD-Learning algorithm with post-decision states shown in Algorithm \ref{algo_TD_Learning} always returns the optimal policy if $T$ is large enough. When $T$ is moderately large, it is enough to obtain a policy close to optimal as we will show in the next section. Our algorithm avoids solving the complex bellman's equation which has an exponential complexity. Once calculated the cost vector $c_x^{\pi}$ and transition matrix $P_x^{\pi}$, we easily obtain the temporal differences $d_t$ by Monte Carlo simulation and evaluate the value functions that is needed for policy improvement in the policy iteration algorithm. One can also leverage parallel computing to obtain multiple sample paths at the same time, which further reduces the computation time. 

\section{Numerical Results}
\label{sec_numerical}
In this section, we show the numerical results comparing the policy obtained from the TD-Learning method to the myopic policy that always dispatch the closest available unit for systems with $N=5, 10$ and $15$ units.  We created an imaginary region which is partitioned into $J=30$ demand nodes. We randomly locate units in the region and obtain the corresponding response times from each unit to each demand point. 

The myopic policy $\pi^m$ is obtained by choosing from the available units in each state $s\in S$ that results in the shortest response time. We solve the exact mean response time of the myopic policy using the hypercube queuing model developed by \cite{larson74hypercube} for the cases of $N=5$ and $10$ units. For the case where $N=15$, the hypercube model becomes computationally costly due to its exponential complicity, and we obtain the approximate mean response time using the approximation method developed by \cite{larson75approx}, which has been tested to be very close to the exact value. An alternative is to obtain the approximate result via simulation which could be more accurate while taking a longer time. 

The policy from the proposed TD-Learning method with post-decision states is obtained by running the algorithm in 25 iterations. We perform a roll-out with 200,000 state transitions in each iteration and update the parameter vector $r$ using the temporal differences $d$. We record the sample average response time in each iteration and the results are shown in Figures \ref{fig_N_5}, \ref{fig_N_10}, and \ref{fig_N_15}, respectively. 

\begin{figure}[htbp]
\begin{center}
\includegraphics[width=7cm]{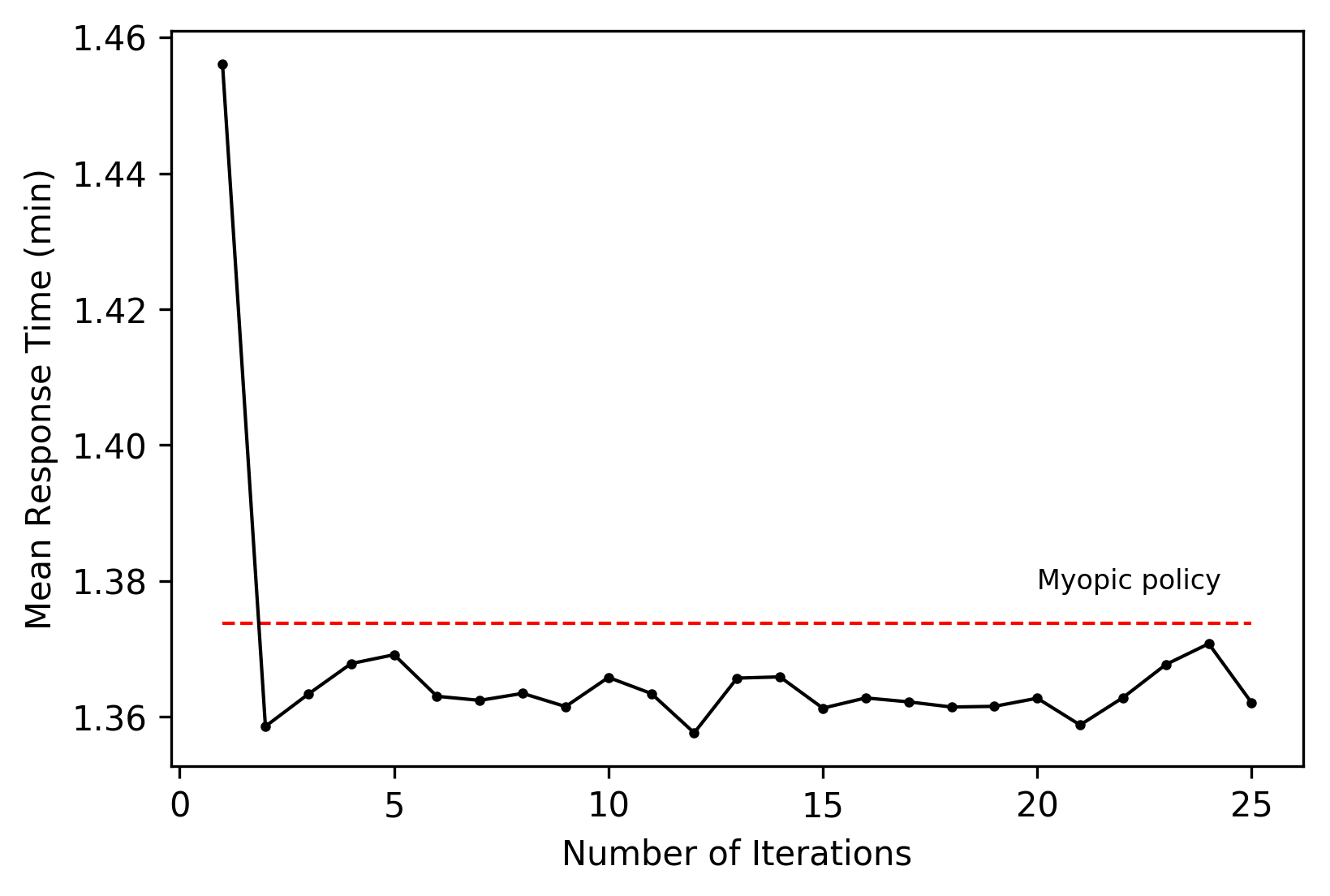}
\caption{Mean response time comparison for $N=5$ units.}
\label{fig_N_5}
\end{center}
\vspace{-0.4cm}
\end{figure}

\begin{figure}[htbp]
\begin{center}
\includegraphics[width=7cm]{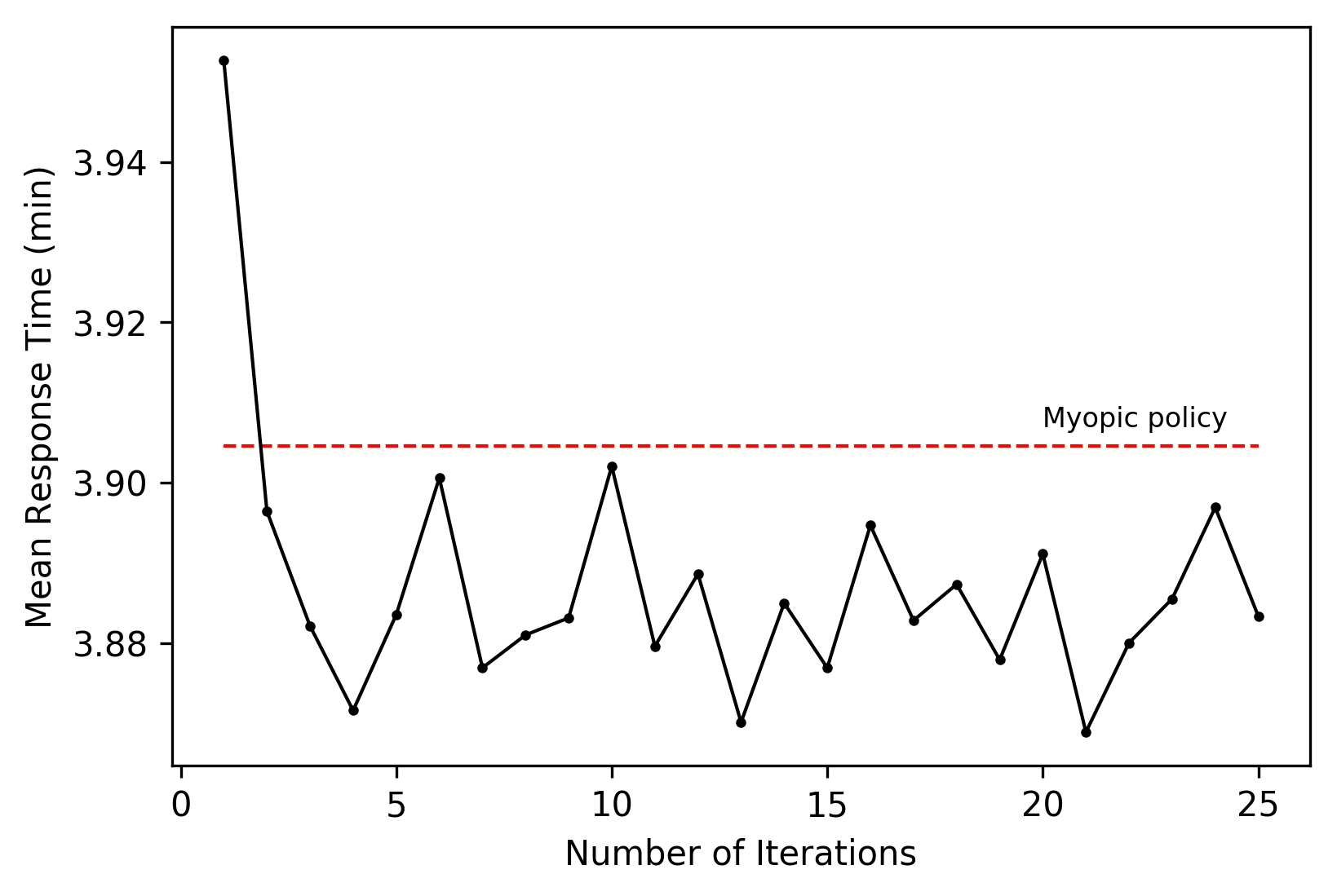}
\caption{Mean response time comparison for $N=10$ units.}
\label{fig_N_10}
\end{center}
\vspace{-0.4cm}
\end{figure}

\begin{figure}[htbp]
\begin{center}
\includegraphics[width=7cm]{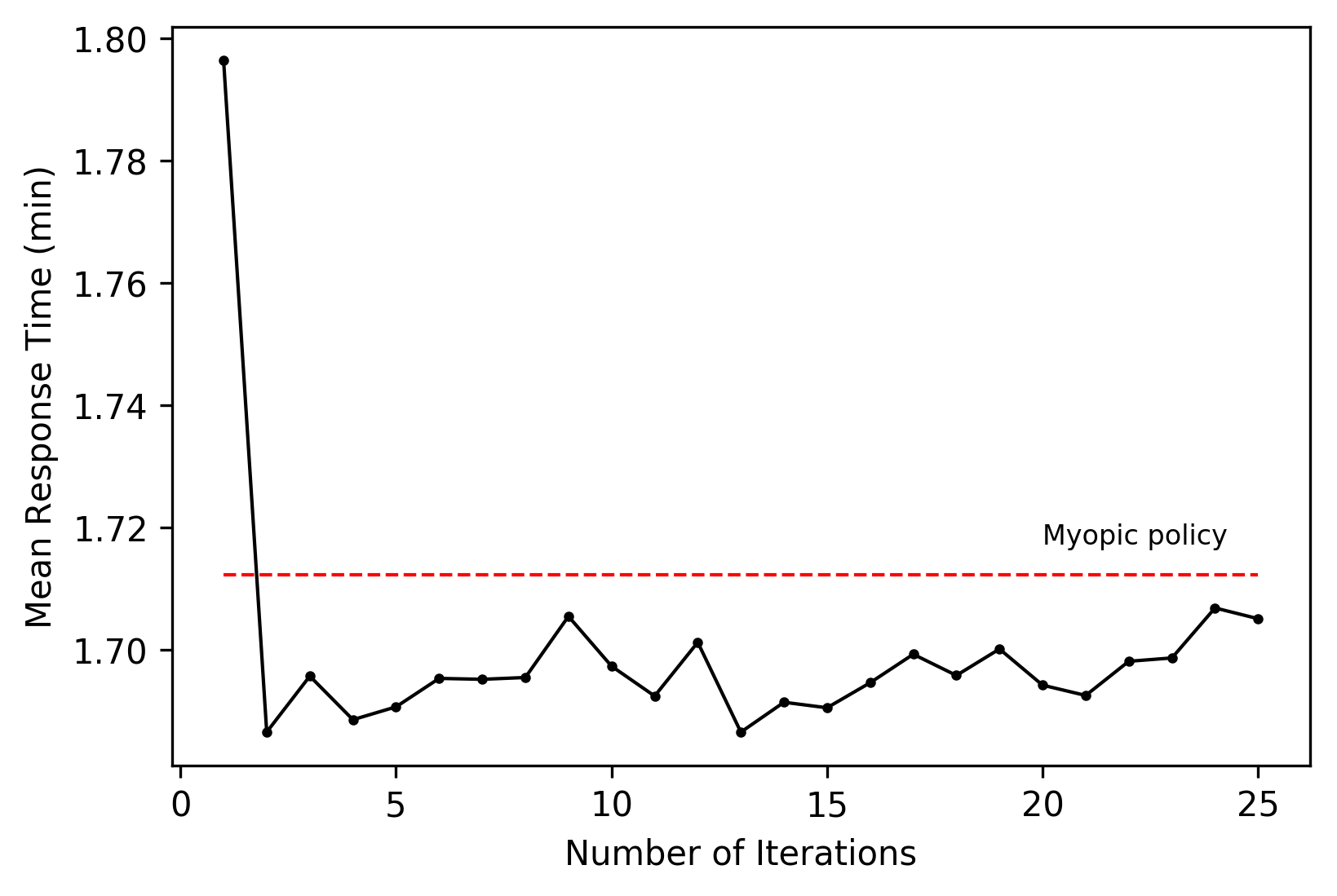}
\caption{Mean response time comparison for $N=15$ units.}
\label{fig_N_15}
\end{center}
\vspace{-0.4cm}
\end{figure}

\begin{figure}[htbp]
\begin{center}
\includegraphics[width=8cm]{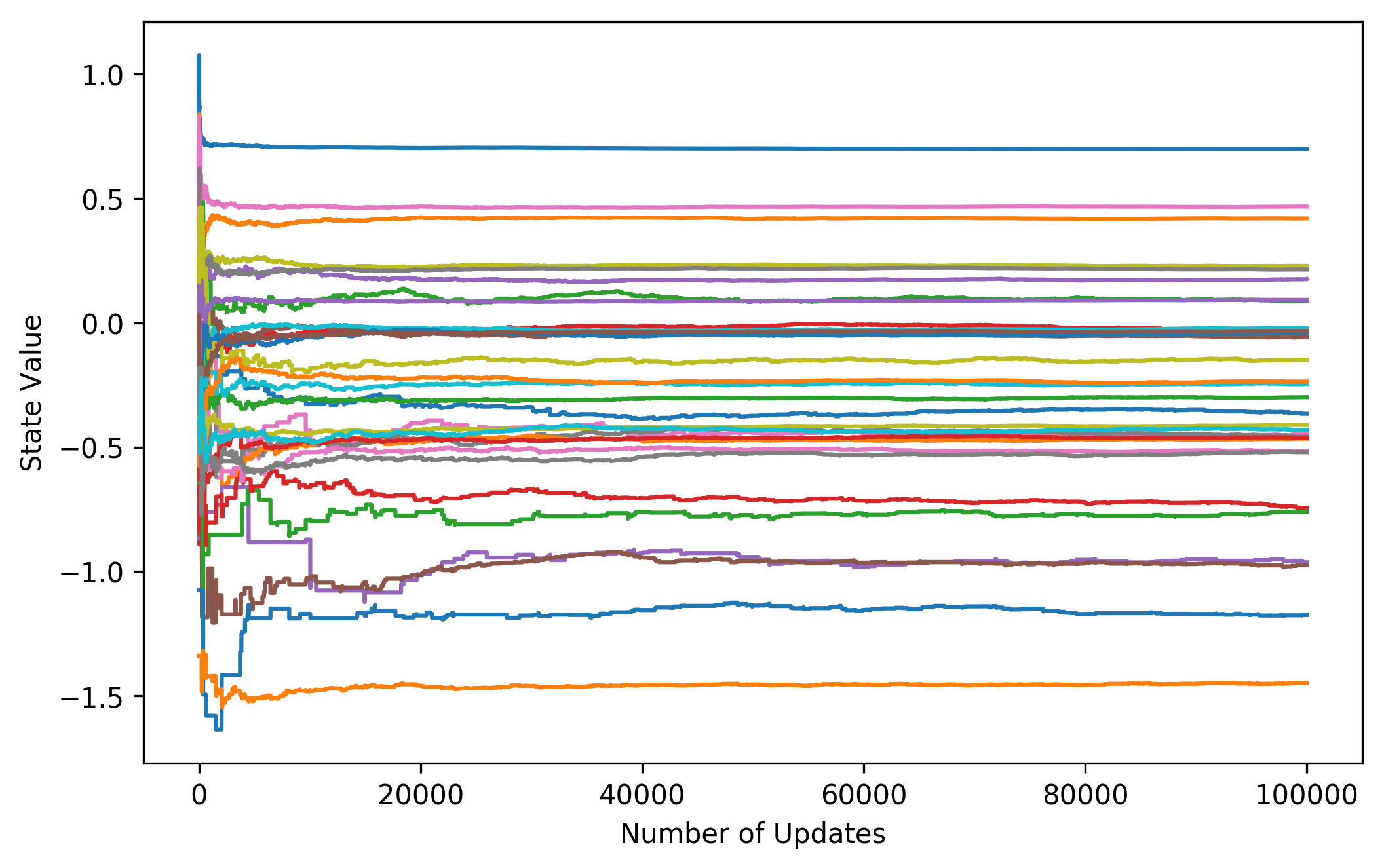}
\caption{State value updates in one TD-Learning iteration.}
\label{fig_SV_5}
\end{center}
\vspace{-0.4cm}
\end{figure}

From the three experiments we observe that the TD-Learning algorithm converges quickly in all cases as expected. We show the updates of the post-decision state values $\tilde{J}$ in one TD-Learning iteration for the case of $N=5$ in Figure \ref{fig_SV_5}. The resulted policies in all three cases outperform the myopic policy that always dispatch the closest available units. We also observe that our algorithm obtains a superior policy fairly quickly in about three iterations. In the case where $N=15$, solving the Bellman's equation requires solving a system with $31\times 2^{15}=1,015,808$ states, which is very computational costly, if not infeasible, by most modern computers. In contrast, our method obtains a good policy in less than two minutes in this case, and it applies virtually to systems with any sizes as guaranteed by its theoretical properties.  

Note that the policies obtained from our algorithm result in an average of three seconds reduction in terms of response time with no additional resources.  This suggests that myopic policies are unnecessarily wasting critical resources and slightly more intelligent policies can provide improved performance with little to no cost.

\section{Conclusion}
\label{sec_conclusion}

In this paper, we model the ambulance dispatch problem as an average-cost Markov decision process and aim to find the optimal dispatch policy that minimizes the mean response time. The regular MDP formulation has a state space of $(J+1)\cdot 2^N$, where $J$ is the number of demand nodes that partition the entire region and $N$ is the total number of ambulances in the system. The policy iteration is able to find the optimal policy. We propose an alternative MDP formulation that uses the post-decision states and reduce the state space to $2^N$. We show that this formulation is mathematically equivalent to the original MDP formulation. 

Even though the state space is reduced in the new formulation, due to the curse of dimensionality, its application is only restricted to small problems. We next present a TD-Learning algorithm based on the post-decision states that is guaranteed to converge to the optimal solution. In our numerically experiments, we show that the TD-Learning algorithm with post-decision states converge quickly and the policies obtained from our method outperforms the myopic policy that always dispatch the closest available unit in all cases. 

\bibliographystyle{named}
\bibliography{reference}
\end{document}